 \journal{Elsevier}
\begin{document}

\newtheorem{theorem}{Theorem}[section]
\newtheorem{proposition}{Proposition}[section]
\newtheorem{lemma}{Lemma}[section]
\newtheorem{corollary}{Corollay}[section]
\newtheorem{example}{Example}[section]
\newtheorem{remark}{Remark}[section]

\renewcommand{\algorithmicrequire}{\textbf{Input:}}
\renewcommand{\algorithmicensure}{\textbf{Output:}}
\newcommand{\abs}[1]{\lvert#1\rvert}

\begin{frontmatter}



\title{Parallel computation of real solving bivariate polynomial systems by zero-matching method\tnoteref{t1}}
\tnotetext[t1]{This research was partly supported by China 973
Project NKBRPC-2004CB318003, the Knowledge Innovation Program of the
Chinese Academy of Sciences KJCX2-YW-S02, and the National Natural
Science Foundation of China(Grant NO.10771205).}


\author[UESTC,CASIT,GUCAS]{Xiaolin Qin\corref{cor2}}
\ead{qinxl@casit.ac.cn}
\author[UESTC]{Yong Feng\corref{cor1}}
\ead{yongfeng@casit.ac.cn}
\author[CASIT,GUCAS]{Jingwei Chen}
\author[UESTC,CASIT]{Jingzhong Zhang}

\cortext[cor1]{Corresponding author}
\cortext[cor2]{Principal
corresponding author}

\address[UESTC]{Laboratory of Computer Reasoning and Trustworthy Computation, University of Electronic Science and Technology of China, Chengdu 610054,
China}
\address[CASIT]{Laboratory for Automated Reasoning and Programming, Chengdu Institute of Computer Applications, CAS, Chengdu 610041, China}
\address[GUCAS]{Graduate University of Chinese Academy of Sciences, Beijing 100049, China}

\begin{abstract}
We present a new algorithm for solving the real roots of a bivariate
polynomial system $\Sigma=\{f(x,y),g(x,y)\}$ with a finite number of
solutions by using a zero-matching method. The method is based on a
lower bound for bivariate polynomial system when the system is
non-zero. Moreover, the multiplicities of the roots of $\Sigma=0$
can be obtained by a given neighborhood. From this approach, the
parallelization of the method arises naturally. By using a
multidimensional matching method this principle can be generalized
to the multivariate equation systems.
\end{abstract}

\begin{keyword}
Bivariate polynomial system; Zero-matching method; Real roots;
Symbolic-numerical computation; Parallel computation


\end{keyword}

\end{frontmatter}


\section{Introduction}
\label{}
 Considering the following system:
\begin{equation}\label{eq:1}
\Sigma = \{f(x,y),g(x,y)\}, \end{equation} we assume that
$f(x,y),g(x,y) \in \mathbb{Q}[x,y]$, where $\mathbb{Q}$ is the field
of rational numbers. We call the \textbf{zero-dimension} if the
bivariate polynomial system ($\ref{eq:1}$) has a finite number of
solutions.

Real solving bivariate polynomial system in a real field is an
active area of research. It is equivalent to finding the
intersections of $f(x,y)$ and $g(x,y)$ in the real plane. The
problem is closely related to computing the topology of a plane real
algebraic curve and other important operations in non-linear
computational geometry and Computer-Aided Geometric
Design\cite{ACM1984,GN2002,GL2004,EKW2007,KSP2008}. Another field of
application is the quantifier elimination\cite{C1975,J1997}. There
are several algorithms that tackle this problem such as the
Gr$\ddot{o}$ber basis method\cite{L2009,R1999}, the resultant method
\cite{YZH1996}, the characteristic set method \cite{CGY2008}, and
the subdivision method\cite{B2006,MP2008}. However, the procedure of
these techniques is very complicated. In this paper, we propose an
efficient approach to remedy these drawbacks.

In this paper, we propose a zero-matching method to solve the real
roots of an equation system like ($\ref{eq:1}$). The basic idea of
zero-matching method is as follows: First projecting the roots of
$\Sigma$ to the $x$-axis, gives the roots \{$x_{1}, \cdots,
x_{u}$\}, and the $y$-axis, gives the roots \{$y_{1}, \cdots,
y_{v}$\}, respectively. Subsequently, for every root $x_{i}$, and
for every $y_{j}$ is backsubstituted in $f(x,y)$ and $g(x,y)$. To
that end, for some root $x_{i}$ there is the corresponding one or
more roots $y_j$ to be determined satisfying $\Sigma$. The main
contribution of our method is that how to determine the real roots
of $\Sigma=0$ and the multiplicities of the roots. Moreover, our
approach that has given solutions to this situation can be the
design of parallelized algorithms.

In \cite{DET2007}, Diochnos \emph{et al} presented three algorithms
to real solving bivariate systems and analyzed their asymptotic bit
complexities. Among the three algorithms, the difference is the way
they match solutions. The method of specialized Rational Univariate
Representation \textsc{(rur)} based on fast \textsc{gcd}
computations of polynomials with coefficients in an extension field
to achieve efficiency (hence the name \textsc{g\_rur}) has the
lowest complexity and performs best in numerous experiments. The
\textsc{grur} method projects the roots to the $x-$axis and
$y-$axis, for each $x$-coordinate $\alpha$ computes the \textsc{gcd}
$h(\alpha,y)$ of the square-free parts of $f(\alpha,y)$ and
$g(\alpha,y)$, and isolates the roots of $h(\alpha,y)=0$ based on
computations of algebraic numbers and the \textsc{rur} techniques.
Our algorithm only uses resultant computation and real solving for
univariate polynomial equations with rational coefficients.

The hybrid method proposed by Hong \emph{et al}\cite{HSZ2008} that
projects the roots of $\Sigma$ to the $x$-axis and $y$-axis
respectively and uses the improved slope-based Hansen-Sengupta to
determine whether the boxes formed by the projection intervals
contain a root of $\Sigma$. The numerical method only works for
simple roots of $\Sigma$. When the system has multiple roots, the
\textsc{rur} technique is used to isolate the roots. Compared with
this method, our approach also computes two resultants of the same
total degrees. However, our method is a complete one, their
numerical iteration method needs to use the \textsc{rur} technique
to find multiple roots.

In\cite{BBG2005}, Bekker \emph{et al} presented a Combinatorial
Optimization Root Selection method (hence the name \textsc{cors}) to
match the roots of a system of polynomial equations. However, the
method is only suitable for solving a small system of polynomial
equations, and does not work for the multiple roots. Recently, Cheng
\emph{et al}\cite{CGL2008} proposed a local generic position method
to solve the bivariate polynomial equation system. The method can be
used to represent the roots of a bivariate equation system as the
linear combination of the roots of two univariate equations.
Moreover, the multiplicities of the roots of the bivariate
polynomial system are also derived. However, the method is very
complicated to extend to solve the multivariate equation systems.
Our method can solve the larger systems and easily generalize to the
multivariate equation systems.

The rest of this paper is organized as follows. In Section 2, we
give some notations, a lower bound for bivariate polynomial equation
if it is non-zero, and how to determine the root multiplicity. In
Section 3, we propose the algorithm to real solving the bivariate
polynomial system and give a detailed example. In section 4, we
present some comparisons of our algorithm. The final section
concludes this paper.

\section{Notations and main results}
\subsection{Notations}
In what follows \textbf{D} is a ring, \textbf{F} is a commutative
field of characteristic zero and $\overline{\textbf{F}}$ its
algebraic closure. Typically $\textbf{D}=\mathbb{Z}$,
$\textbf{F}=\mathbb{Q}$ and $\overline{\textbf{F}}
=\overline{\mathbb{Q}}$.

In this paper, we consider the zero-dimensional bivariate polynomial
system as follows:
\begin{equation} \label{eq:2}
\left\{ \begin{aligned}
        f(x,y) &= \sum_{0\leq i \leq n}\sum_{0\leq j \leq m}a_{i,j}x^{i}y^{j}=0\\
                  g(x,y)&=\sum_{0\leq i \leq p}\sum_{0\leq j \leq q}b_{i,j}x^{i}y^{j}=0\\
                          \end{aligned} \right. .
                          \end{equation}
Throughout this paper, note that $deg_{x}=\max{(n,p)}$,
$deg_{y}=\max{(m,q)}$, $N=\max{(||f||_{1},||g||_{1})}$, where the
$||f||_{1}$ and $||g||_{1}$ are the
 one norm of the vector $(a_{00},a_{01},\cdots, a_{0m},\cdots, a_{n0},\cdots,a_{nm})$ and
$(b_{00},b_{01},\cdots,b_{0q},\cdots,b_{p0},\cdots,b_{pq})$, so
$||f||_{1}=\Sigma_{i}\Sigma_{j}|a_{ij}|$, and
$||g||_{1}=\Sigma_{i}\Sigma_{j}|b_{ij}|$, respectively. $M =\max
(||t||_{1}, ||T||_{1})$, where the $t(x)$ and $T(y)$ are the no
extraneous factors in resultant polynomial of $\Sigma$. $|\Sigma|$
denotes that the bivariate polynomial system $\Sigma$ has been
assigned values to two variables.

Let $\pi$ be the projection map from the $\Sigma$ to the $x$-axis:
\begin{equation}\label{eq:8}
\pi: \mathbb{R}^{2}\rightarrow\mathbb{R}, \ \ \ \ \mbox{such that} \
\pi(x, y)=x.
\end{equation}
For a zero-dimensional system $\Sigma$ defined in ($\ref{eq:2}$),
let $t(x)\in \mathbb{Q}[x]$ be the resultant of $f(x,y)$ and
$g(x,y)$ with respect to $y$:
\begin{equation}
t(x)=\mbox{Res}_{y}(f(x,y),g(x,y)).
\end{equation}
Since $\Sigma$ is zero-dimensional, we have $t(x)\nequiv 0$. Then
$\pi(\textbf{V}(\Sigma))\subseteq \textbf{V}(t(x))$, where
$\textbf{V}(f_{1},\cdots,f_{m})$ is the set of common real zeros of
$f_{i}=0$. If $t(x)$ is irreducible, then denote the highest degree
by $deg_{t}$. Let the real roots of $t(x)=0$ be
\begin{equation}\label{eq:alpha}
\alpha_{1}<\alpha_{2}<\cdots<\alpha_{u}.
\end{equation}
By using the same method, let $T(y)\in \mathbb{Q}[y]$ be the
resultant of $f(x,y)$ and $g(x,y)$ with respect to $x$:
\begin{equation}
T(y)=\mbox{Res}_{x}(f(x,y),g(x,y)).
\end{equation}
If T(y) is irreducible, then denote the highest degree by $deg_{T}$.
Let the real roots of $T(y)$ be as follows:
\begin{equation}\label{eq:beta}
\beta_{1}<\beta_{2}<\cdots<\beta_{v}.
\end{equation}
We observe that the above projection map may generate extraneous
roots. Fortunately, we can easily discard these extraneous factors
by computing the determinant of the sub-matrix of the coefficient
matrix. Moreover, if the resultant is irreducible, then it is no
extraneous factors. However, when the resultant is reducible, it may
suffer from the extraneous factors. The method of removing
extraneous factors mentioned can be adapted to the resultant for the
bivariate polynomial system \cite{ZF2009}. It is the following
theorem to remove the extraneous roots.
\begin{theorem}\label{thm:remove}
$\Sigma$ is defined as in ($\ref{eq:2}$). If the resultants of \
$\Sigma$ for one variable is reducible, denoted by $tem$, then the
resultant of bivariate polynomial system is the only some
irreducible factors in which the other variable appear.
\end{theorem}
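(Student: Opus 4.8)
I read the statement as follows: writing $tem=\mbox{Res}_y(f,g)=\prod_k p_k(x)^{e_k}$ with the $p_k$ distinct and irreducible in $\mathbb{Q}[x]$, the ``genuine'' resultant of $\Sigma$ — the factor whose zero set is exactly $\pi(\mathbf{V}(\Sigma))$ — is the product of those $p_k$ over whose roots the specialized system $|\Sigma|$ still possesses a value of the other variable solving $\Sigma$, while the remaining $p_k$ are the extraneous factors. The plan is to establish this through three ingredients: multiplicativity of the Sylvester resultant, the specialization (Poisson) formula, and a Galois-stability argument, with the sub-determinant test mentioned just before the theorem serving as the effective criterion.

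First I would factor $f=\prod_i f_i$, $g=\prod_j g_j$ into irreducibles in $\mathbb{Q}[x,y]$. Since $\Sigma$ is zero-dimensional, $f$ and $g$ share no factor, so no $f_i$ is associate to any $g_j$ and $tem=\prod_{i,j}\mbox{Res}_y(f_i,g_j)\not\equiv 0$; it thus suffices to describe the irreducible factors contributed by each block $\mbox{Res}_y(f_i,g_j)$. If $f_i$ (resp.\ $g_j$) is free of $y$, the block equals $f_i(x)^{\deg_y g_j}$ (resp.\ a power of $g_j$), whose roots are abscissae of vertical components of $\mathbf{V}(f_i,g_j)$ and therefore genuine. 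If both $f_i,g_j$ genuinely involve $y$, the Poisson formula $\mbox{Res}_y(f_i,g_j)=\mathrm{lc}_y(f_i)^{\deg_y g_j}\prod_{f_i(x,\eta)=0}g_j(x,\eta)$ shows that a root $\alpha$ of the block is either the abscissa of a point of $\mathbf{V}(f_i,g_j)$ — so $\gcd\bigl(f_i(\alpha,y),g_j(\alpha,y)\bigr)\ne 1$ and a solving value of $y$ exists — or a common root of $\mathrm{lc}_y(f_i)$ and $\mathrm{lc}_y(g_j)$ arising from a branch escaping to infinity, in which case $|\Sigma|$ has no solution above $\alpha$. Hence each irreducible $p_k\mid tem$ has all its roots of the first (genuine) type or all of the second (extraneous) type: the property ``$\alpha$ is the abscissa of a point of $\mathbf{V}(\Sigma)$'' is stable under $\mathrm{Gal}(\overline{\mathbb{Q}}/\mathbb{Q})$, since $(\alpha,\beta)\in\mathbf{V}(\Sigma)$ implies $\sigma(\alpha,\beta)\in\mathbf{V}(\Sigma)$, so it holds for one root of the irreducible $p_k$ iff for all.

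It remains to make this dichotomy effective, which is exactly the ``determinant of the sub-matrix of the coefficient matrix'' remark: over a root $\alpha$ of $p_k$ one passes from $\mbox{Res}_y(f,g)$ to the resultant of the actually specialized polynomials $f(\alpha,y),g(\alpha,y)$, whose Sylvester matrix is a sub-matrix of the specialized coefficient matrix because the leading $x$-coefficients may vanish at $\alpha$; $p_k$ is genuine precisely when this sub-matrix is rank-deficient, equivalently when the corresponding factor survives in $T(y)=\mbox{Res}_x(f,g)$ — which is what the statement calls ``the irreducible factors in which the other variable appear.'' Running the symmetric argument for $T(y)$ matches the surviving factors of $tem$ with those of $T(y)$. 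The main obstacle, and where the bookkeeping is delicate, is the mixed case in which $\mathrm{lc}_y(f_i)$ and $\mathrm{lc}_y(g_j)$ share a factor $p$ that is simultaneously the abscissa of a genuine intersection: one must then split the multiplicity $e_k$ of $p$ in $tem$ into its genuine part and the spurious part contributed by the vanishing leading coefficients, and it is the rank of the specialized sub-matrix — rather than the mere vanishing of $tem$ — that performs this separation. The remaining steps are routine once multiplicativity and the specialization property $\mbox{Res}_y(f,g)(\alpha)=0\iff\bigl(\gcd(f(\alpha,y),g(\alpha,y))\ne 1\ \text{or}\ \mathrm{lc}_y(f)(\alpha)=\mathrm{lc}_y(g)(\alpha)=0\bigr)$ are in hand.
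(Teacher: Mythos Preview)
The paper does not actually give a proof here: its entire argument is the single sentence ``The proof can be given similarly to that in Proposition~4.6 of Chapter~3 of \cite{CLS2005}.'' Your sketch is therefore already far more detailed than the paper, and the ingredients you use---multiplicativity of the Sylvester resultant, the Poisson/specialization formula, and the Galois-stability of the condition ``the fibre of $\mathbf{V}(\Sigma)$ over $\alpha$ is nonempty''---are exactly the standard tools behind the cited result in Cox--Little--O'Shea, so in spirit your route and the paper's deferred route coincide.

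Two minor remarks. First, your interpretation of the garbled phrase ``irreducible factors in which the other variable appear'' as ``factors over whose roots a solving value of the \emph{other} variable exists'' is the only sensible reading, and your dichotomy argument establishes precisely that. Second, the careful bookkeeping in your last paragraph about splitting the multiplicity $e_k$ into genuine and spurious parts goes beyond what the theorem (or the paper's Algorithm~1) actually needs: both only ask \emph{which} irreducible $p_k$ are extraneous, not with what exponent, so the Galois-orbit argument already suffices and the sub-matrix rank test is an effective criterion rather than a necessary step in the proof.
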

\begin{proof}
The proof can be given similarly to that in Proposition 4.6 of
Chapter 3 of \cite{CLS2005}.
\end{proof}
\subsection{A lower bound for $|\Sigma|$, if \ $\Sigma \neq 0$}
The purpose of this subsection is to prove the following theorem.
\begin{theorem}\label{thm:1}
$\Sigma$ is defined as in ($\ref{eq:2}$). Let $\alpha, \beta$ be two
approximate real algebraic numbers. Denote by the integer
$s=deg_{t}\cdot deg_{T}$, and $N$ as above. If \ $|\Sigma| \neq 0$,
then
\begin{equation}\label{eq:3}
|\Sigma |\geq N^{1-s}M^{-c\cdot s},
\end{equation}
where $c$ is the constant satisfying certain conditions, $|\Sigma|$
is the following two cases:\\
(a) If $f(\alpha, \beta)=0$ or $g(\alpha, \beta)=0$, then $|\Sigma|=
\max\{|f(\alpha, \beta)|, |g(\alpha, \beta)|\}$;\\
(b) If $f(\alpha, \beta)\neq 0$ and $g(\alpha, \beta) \neq 0$, then
$|\Sigma|= \min\{|f(\alpha, \beta)|, |g(\alpha, \beta)|\}$.
\end{theorem}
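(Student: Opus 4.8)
The plan is to reduce the statement to a single value-separation inequality for one polynomial of the system and then prove that inequality by a Liouville-type norm argument. Throughout we may assume $f,g\in\mathbb{Z}[x,y]$, since clearing the common denominators of the coefficients affects neither $\textbf{V}(\Sigma)$ nor the shape of the bound, only the numerical values of $N$ and $M$; and we read the statement with $\alpha,\beta$ the genuine real roots of $t(x)$ and $T(y)$ that the given approximate algebraic numbers represent (the inclusion $\pi(\textbf{V}(\Sigma))\subseteq\textbf{V}(t)$ and its analogue for $T$ make this the natural reading, and it is what makes $deg_{t},deg_{T}$ the pertinent degrees). First I would dispose of cases (a) and (b): since $|\Sigma|\neq0$, in case (a) exactly one of $f(\alpha,\beta),g(\alpha,\beta)$ vanishes and $|\Sigma|=|h(\alpha,\beta)|$ for the other $h\in\{f,g\}$, while in case (b) both are nonzero and $|\Sigma|=\min\{|f(\alpha,\beta)|,|g(\alpha,\beta)|\}$. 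Hence it suffices to prove: \emph{if $h\in\{f,g\}$ and $h(\alpha,\beta)\neq0$, then $|h(\alpha,\beta)|\ge N^{1-s}M^{-c s}$}. I carry this out for $h=f$; the case $h=g$ is word for word the same after replacing $(n,m)$ by $(p,q)$ and using $\|g\|_1\le N$, $p\le deg_{x}$, $q\le deg_{y}$.

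The heart of the matter is to present $\theta:=f(\alpha,\beta)$ as a rescaled \emph{nonzero algebraic integer} and invoke $|\mathrm{Norm}(\cdot)|\ge1$. Let $c_t,c_T$ be the (nonzero integer) leading coefficients of $t,T$. Then $c_t\alpha$ and $c_T\beta$ are algebraic integers, so $\Theta:=c_t^{\,n}c_T^{\,m}\theta=\sum_{0\le i\le n}\sum_{0\le j\le m}a_{ij}\,c_t^{\,n-i}c_T^{\,m-j}(c_t\alpha)^i(c_T\beta)^j$ is a nonzero algebraic integer in $\mathbb{Q}(\alpha,\beta)$ whose degree $d$ over $\mathbb{Q}$ satisfies $d\le[\mathbb{Q}(\alpha,\beta):\mathbb{Q}]\le deg_{t}\cdot deg_{T}=s$. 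Its $\mathbb{Q}$-conjugates are all of the form $c_t^{\,n}c_T^{\,m}f(\alpha',\beta')$ with $\alpha'$ a (possibly complex) root of $t$ and $\beta'$ a root of $T$. Because $\Theta$ is an algebraic integer, $\prod_{k=1}^{d}|\Theta^{(k)}|=|\mathrm{Norm}(\Theta)|$ is a nonzero rational integer, hence $\ge1$; solving for $|\Theta^{(1)}|=|\Theta|$ gives $|\theta|=|\Theta|/(|c_t|^{n}|c_T|^{m})\ge\big(|c_t|^{n}|c_T|^{m}\prod_{k=2}^{d}|\Theta^{(k)}|\big)^{-1}$. Dropping exactly the one factor $\Theta^{(1)}$ here is what will produce the exponent $1-s$ on $N$ rather than $-s$.

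It remains to bound the $d-1\le s-1$ conjugates $\Theta^{(k)}$ ($k\ge2$) and the prefactor, which is purely crude estimation. For a root $\alpha'$ of $t$ and $\beta'$ of $T$ one has $|f(\alpha',\beta')|\le\|f\|_1\max(1,|\alpha'|)^{n}\max(1,|\beta'|)^{m}\le N\max(1,|\alpha'|)^{n}\max(1,|\beta'|)^{m}$; an elementary (Cauchy) root bound gives $\max(1,|\alpha'|),\max(1,|\beta'|)\le M^{2}$, and $|c_t|,|c_T|\le M$. Hence each $|\Theta^{(k)}|\ (k\ge2)$ is at most $M^{n+m}\cdot N\cdot M^{2(n+m)}=N\,M^{3(n+m)}$, so $\prod_{k=2}^{d}|\Theta^{(k)}|\le N^{d-1}M^{3(n+m)(d-1)}$ while $|c_t|^{n}|c_T|^{m}\le M^{n+m}$. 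Since $N,M\ge1$ and $d\le s$, this gives $|\theta|\ge N^{1-d}M^{-(3(n+m)(d-1)+(n+m))}\ge N^{1-s}M^{-3(n+m)s}$, i.e.\ the asserted bound with $c=3(n+m)$ for $h=f$ (and $c=3(p+q)$ for $h=g$, so one common value is $c=3\max(n+m,p+q)$). A smaller $c=(n+m)+n/deg_{t}+m/deg_{T}$ results if one replaces the per-root estimate by Landau's inequality $\prod_{t(\alpha')=0}\max(1,|\alpha'|)\le\|t\|_2\le M$ together with the observation that a fixed root of $t$ occurs at most $deg_{T}$ times among the conjugate pairs.

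The main obstacle is not any single deep step but the bookkeeping: every appearing quantity must be charged either to the "$N$ side" or to the "$M$ side" so that the exponent of $N$ comes out \emph{exactly} $1-s$, which is why $\Theta^{(1)}$ is omitted from the product and why the growth of the root moduli is pushed entirely onto powers of $M$. Two points of care remain: $\alpha$ and $\beta$ must be true roots of $t$ and $T$ for $\Theta$ to have the stated conjugates and the degree bound $d\le s$, so the inequality is to be understood for the exact roots that the approximate numbers encode — which is exactly what the algorithm requires (evaluate $f,g$ at the projected roots to sufficient precision and compare with $N^{1-s}M^{-c s}$); and the reduction to $\mathbb{Z}$-coefficients rescales $N$ and $M$, so over $\mathbb{Q}$ the constant $c$ (or a multiplicative factor) must also absorb the coefficient denominators. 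An alternative route, closer in spirit to the paper's resultant machinery, builds the integer polynomial $\mathrm{Res}_{x}(t(x),\mathrm{Res}_{y}(T(y),z-f(x,y)))$, divides out its highest power of $z$, and bounds the resulting nonzero integer constant term by Hadamard's inequality; the estimates come out the same.
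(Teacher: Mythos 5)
Your proof is correct, and it reaches the bound by a route that differs from the paper's in an instructive way. The paper's proof has the same skeleton as yours (treat $f(\alpha,\beta)$ and $g(\alpha,\beta)$ separately, then dispose of cases (a) and (b) exactly as you do), but for the key inequality it simply cites the Mignotte--Waldschmidt lemma (Lemma~\ref{lem:1}, a Liouville-type lower bound phrased via Mahler measures) together with Landau's estimate $M(\alpha)\le ||t||_2\le ||t||_1$ (Lemma~\ref{lem:2}), which immediately gives $|f(\alpha,\beta)|\ge ||f||_1^{1-k}||t||_1^{-kn/r}||T||_1^{-km/t}$ and then packages the exponents into $s$ and a constant $c$. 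You instead re-derive that inequality from first principles: clearing denominators, multiplying by $c_t^{\,n}c_T^{\,m}$ to produce a nonzero algebraic integer, invoking $|\mathrm{Norm}|\ge 1$, and charging the conjugate sizes to powers of $M$ via root bounds --- which is in essence an inlined proof of the cited lemma specialized to two variables. What your version buys is self-containedness and a fully explicit constant (e.g.\ $c=3\max(n+m,p+q)$, or the sharper $(n+m)+n/deg_t+m/deg_T$ via Landau), which is arguably cleaner than the paper's ``constant satisfying certain conditions'' (indeed the paper's stated $c=deg_t/r+deg_T/t+1$ omits the dependence on the degrees $n,m,p,q$ of $f$ and $g$ that its own lemma produces, whereas your bookkeeping makes that dependence visible); you also explicitly flag the $\mathbb{Q}$-versus-$\mathbb{Z}$ coefficient issue and the reading of $\alpha,\beta$ as exact roots of $t,T$, both of which the paper glosses over. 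What the paper's route buys is brevity and, through the Mahler-measure formulation, exponents $kn/r$, $km/t$ that are tighter than your crude per-conjugate estimate (your closing refinement essentially recovers them). Two cosmetic points: your Cauchy-bound claim $\max(1,|\alpha'|)\le M^2$ needs the trivial remark for $M=1$ (then $t=\pm x^{deg_t}$ and the root is $0$), or can be replaced outright by the Landau bound $\max(1,|\alpha'|)\le M$; and in bounding the norm you may equally well take the product over all $[\mathbb{Q}(\alpha,\beta):\mathbb{Q}]$ embeddings rather than over the conjugates of $\Theta$ itself --- either way the count is at most $s$ and the exponent $1-s$ on $N$ survives.
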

Before giving the proof of theorem $\ref{thm:1}$, we recall two
lemmas:
\begin{lemma}\label{lem:1}
(\cite {MW1978}, lemma 3) Let $\alpha_{1},\dots,\alpha_{q}$ be
algebraic numbers of exact degree of $d_{1},\dots,d_{q}$
respectively. Define
$D=[\mathbb{Q}(\alpha_{1},\dots,\alpha_{q}):\mathbb{Q}]$. Let $P\in
\mathbb{Z}[x_{1},\dots,x_{q}]$ have degree at most $N_{h}$ in
$x_{h}$($1\leq h\leq q$). If $P(\alpha_{1},\dots,\alpha_{q})\neq 0$,
then
\[
|P(\alpha_{1},\dots,\alpha_{q})|\geq \parallel
P\parallel_{1}^{1-D}\prod_{h=1}^{q}M(\alpha_{h})^{-DN_{h}/d_{h}},
\]
where the $M(\alpha_{h})$ is the Mahler measure of $\alpha_{h}$.
\end{lemma}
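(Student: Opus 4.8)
The plan is to establish this Liouville-type inequality by the classical conjugate--norm argument. Set $\theta = P(\alpha_1,\dots,\alpha_q)$, a nonzero element of the number field $K = \mathbb{Q}(\alpha_1,\dots,\alpha_q)$ with $[K:\mathbb{Q}] = D$; for each $h$ write $\alpha_h^{(1)},\dots,\alpha_h^{(d_h)}$ for the conjugates of $\alpha_h$ and $a_h \ge 1$ for the leading coefficient of its primitive minimal polynomial $A_h \in \mathbb{Z}[x]$, so that $M(\alpha_h) = a_h\prod_{k=1}^{d_h}\max\{1,|\alpha_h^{(k)}|\}$. I would then control the field norm $\mathrm{Nm}_{K/\mathbb{Q}}(\theta) = \prod_{\sigma}\sigma(\theta) = \prod_{\sigma}P(\sigma\alpha_1,\dots,\sigma\alpha_q)$, the product being over the $D$ embeddings $\sigma\colon K\hookrightarrow\mathbb{C}$: from below by an integrality statement, and from above by applying the triangle inequality to the conjugate factors $\sigma\ne\mathrm{id}$.

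For the lower bound: since $\theta\ne 0$, every $\sigma(\theta)\ne 0$, so $\mathrm{Nm}_{K/\mathbb{Q}}(\theta)$ is a nonzero rational number, and I claim that $\bigl(\prod_{h}a_h^{DN_h/d_h}\bigr)\mathrm{Nm}_{K/\mathbb{Q}}(\theta)\in\mathbb{Z}\setminus\{0\}$, whence $|\mathrm{Nm}_{K/\mathbb{Q}}(\theta)|\ge\prod_{h}a_h^{-DN_h/d_h}$. To see this, note that as $\sigma$ runs over the $D$ embeddings, $\sigma\alpha_h$ takes each conjugate of $\alpha_h$ exactly $D/d_h$ times; hence $\prod_{\sigma}P(\sigma\alpha_1,\dots,\sigma\alpha_q)$, viewed as a $\mathbb{Z}$-polynomial expression, is for each fixed $h$ symmetric under permutations of $\alpha_h^{(1)},\dots,\alpha_h^{(d_h)}$ and of degree at most $N_h D/d_h$ in each of them. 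Expressing it through the elementary symmetric functions of the $\alpha_h^{(k)}$ (which become rational integers after multiplication by $a_h$) then shows that the single factor $a_h^{N_h D/d_h}$ clears all denominators coming from $\alpha_h$. Equivalently, up to that power of the $a_h$ the norm is an iterated resultant of $P$ against the polynomials $A_h$, so this is just the integrality of resultants.

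For the upper bound on the remaining conjugates: for every $\sigma$ the triangle inequality gives $|P(\sigma\alpha_1,\dots,\sigma\alpha_q)|\le\|P\|_1\prod_{h}\max\{1,|\sigma\alpha_h|\}^{N_h}$, so
\[
\prod_{\sigma\ne\mathrm{id}}|\sigma(\theta)|\le\|P\|_1^{D-1}\prod_{\sigma\ne\mathrm{id}}\prod_{h}\max\{1,|\sigma\alpha_h|\}^{N_h}\le\|P\|_1^{D-1}\prod_{\text{all }\sigma}\prod_{h}\max\{1,|\sigma\alpha_h|\}^{N_h},
\]
the last step being harmless because every factor is $\ge 1$. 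The multiplicity count again gives $\prod_{\text{all }\sigma}\max\{1,|\sigma\alpha_h|\}^{N_h}=\bigl(\prod_{k=1}^{d_h}\max\{1,|\alpha_h^{(k)}|\}\bigr)^{DN_h/d_h}=(M(\alpha_h)/a_h)^{DN_h/d_h}$. Combining the two bounds via $|\mathrm{Nm}_{K/\mathbb{Q}}(\theta)|=|\theta|\prod_{\sigma\ne\mathrm{id}}|\sigma(\theta)|$ yields
\[
|P(\alpha_1,\dots,\alpha_q)|=\frac{|\mathrm{Nm}_{K/\mathbb{Q}}(\theta)|}{\prod_{\sigma\ne\mathrm{id}}|\sigma(\theta)|}\ge\frac{\prod_{h}a_h^{-DN_h/d_h}}{\|P\|_1^{D-1}\prod_{h}(M(\alpha_h)/a_h)^{DN_h/d_h}}=\|P\|_1^{1-D}\prod_{h=1}^{q}M(\alpha_h)^{-DN_h/d_h},
\]
the powers of $a_h$ cancelling exactly, which is the claim.

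The main obstacle is the sharp integrality statement in the second step: with the crude power $a_h^{DN_h}$ --- obtained by clearing the contribution of $\alpha_h$ one embedding at a time --- one would be left with extraneous factors $a_h^{DN_h(1-1/d_h)}\le 1$ and fall short of the stated bound. Getting the exponent down to $N_h D/d_h$ is precisely the symmetry-in-conjugates observation (equivalently, a careful resultant computation), and it is what makes the Mahler measures come out cleanly; everything else is the triangle inequality and bookkeeping of embedding multiplicities.
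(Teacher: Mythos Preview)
Your argument is correct and is precisely the classical conjugate--norm proof; the paper, however, does not give any argument of its own but merely cites the original source (Mignotte--Waldschmidt), so there is nothing to compare at the level of strategy. What you wrote is, up to cosmetic differences, the proof one finds in that reference: bound $|\mathrm{Nm}_{K/\mathbb{Q}}(\theta)|$ from below by integrality and from above by the triangle inequality on the non-identity embeddings, using that each conjugate $\alpha_h^{(k)}$ occurs among the $\sigma\alpha_h$ with multiplicity exactly $D/d_h$.

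You are also right to single out the integrality step as the only place where care is needed. The exponent $DN_h/d_h$ (rather than the crude $DN_h$) is exactly what makes the Mahler measures appear with the stated powers, and your justification via symmetry in the conjugates --- equivalently, via the resultant identity $\mathrm{Res}_x(A_h,\,\cdot\,)=a_h^{\deg(\cdot)}\prod_k(\cdot)(\alpha_h^{(k)})$ iterated over $h$ --- is the standard way to obtain it. One small point worth making explicit in a written-up version: the symmetry of $\prod_\sigma P(\sigma\alpha_1,\dots,\sigma\alpha_q)$ in the conjugates of a fixed $\alpha_h$ is most cleanly seen by working in the Galois closure, where permuting the conjugates of $\alpha_h$ corresponds to composing with elements of the Galois group and hence permutes the set of embeddings; this is implicit in your ``multiplicity count'' but deserves a sentence.
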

\begin{proof}
See the Lemma 4 of \cite{MW1978}.
\end{proof}
\begin{lemma}\label{lem:2}
Let $\alpha$ be an algebraic number. Denote by the $M(\alpha)$ of
the Mahler measure of $\alpha$. If $P$ is a polynomial over
$\mathbb{Z}$, then
  $$M(\alpha)\leq ||P||_{1}.$$
\end{lemma}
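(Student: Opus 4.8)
The plan is to derive the bound from two classical properties of the Mahler measure: multiplicativity under polynomial multiplication, and the elementary estimate of $M(P)$ by the supremum of $|P|$ on the unit circle. I read the hypothesis in its intended sense, namely that $P\in\mathbb{Z}[x]$ is a nonzero polynomial with $P(\alpha)=0$ (in the application $P$ will be one of the resultants $t(x)$ or $T(y)$, which by construction vanish at the coordinates of the solutions of $\Sigma$), since for $P$ a unit constant the inequality clearly fails. I use the product form $M(Q)=|c|\prod_i\max\{1,|\gamma_i|\}$ for $Q(x)=c\prod_i(x-\gamma_i)\in\mathbb{C}[x]$, together with the standard convention $M(\alpha):=M(p_\alpha)$, where $p_\alpha\in\mathbb{Z}[x]$ is the primitive minimal polynomial of $\alpha$.

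First I would pass from $P$ to $p_\alpha$. Since $\alpha$ is a root of $P$, the minimal polynomial $p_\alpha$ divides $P$ in $\mathbb{Q}[x]$; as $p_\alpha$ is primitive, Gauss's lemma yields a factorization $P=p_\alpha\,Q$ with $Q\in\mathbb{Z}[x]$. The product form of $M$ shows immediately that it is multiplicative, so $M(P)=M(p_\alpha)\,M(Q)$. Next I would observe that $M(Q)\ge 1$: the leading coefficient of $Q$ is a nonzero integer, hence of absolute value at least $1$, and every factor $\max\{1,|\gamma|\}$ is at least $1$. Consequently $M(\alpha)=M(p_\alpha)\le M(p_\alpha)\,M(Q)=M(P)$.

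It then remains to show $M(P)\le\|P\|_1$. By Jensen's formula the product form of $M$ coincides with $M(P)=\exp\!\big(\int_0^1\log|P(e^{2\pi i\theta})|\,d\theta\big)$, and since the average of a function is bounded by its supremum and $\exp$ is increasing, $M(P)\le\sup_{|z|=1}|P(z)|$; finally, for $|z|=1$, writing $P(x)=\sum_i a_i x^i$ gives $|P(z)|\le\sum_i|a_i|\,|z|^i=\sum_i|a_i|=\|P\|_1$. Chaining the inequalities yields $M(\alpha)\le M(P)\le\|P\|_1$. The one point deserving care — and the step I would treat as the crux — is the equivalence of the product form of $M$ with its integral representation (Jensen's formula), which is precisely what makes both the multiplicativity and the $\|\cdot\|_1$ bound transparent; a reader wishing to avoid complex analysis may instead invoke Landau's inequality $M(P)\le\|P\|_2\le\|P\|_1$ and verify multiplicativity directly from the product form. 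Beyond that I anticipate no obstacle, the argument being a short chain of estimates once the reduction $p_\alpha\mid P$ is in place.
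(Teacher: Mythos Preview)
Your argument is correct and in fact more careful than the paper's. The paper's proof simply recalls the product definition of $M(P)$, then invokes Landau's inequality $M(\alpha)\le\|P\|_2$ (citing von zur Gathen--Gerhard, Thm.~6.31) and finishes with the trivial $\|P\|_2\le\|P\|_1$; it does not explicitly justify the passage from the minimal polynomial $p_\alpha$ to an arbitrary $P\in\mathbb{Z}[x]$ vanishing at $\alpha$. You supply exactly that missing link via Gauss's lemma and the multiplicativity of $M$, and then give two routes to bound $M(P)$: the Jensen/supremum argument as your main line, and Landau's $M(P)\le\|P\|_2$ as an alternative. The paper's route is precisely your alternative, stated tersely and without the reduction step. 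What your approach buys is a self-contained proof that makes clear why the hypothesis $P(\alpha)=0$ is needed; what the paper's approach buys is brevity, at the cost of leaving the reader to supply the factorization argument.
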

\begin{proof}
For any polynomial $P=\sum_{i=0}^{d}p_{i}\in \mathbb{Z}[x]$ of
degree $d$ with the all roots $\sigma^{(1)},\cdots,\sigma^{(d)}$, we
define the $measure$ $M(P)$ by
\begin{eqnarray*}
M(P)=|p_{d}|\Pi_{i=1}^{d}\max{\{1,|\sigma^{(i)}|\}}.
\end{eqnarray*}
The Mahler measure of an algebraic number is defined to be the
Mahler measure of its minimal polynomial over $\mathbb{Q}$. We know
from Landau (\cite{GG1999}, p. 154, Thm. 6. 31) that for each
algebraic number $\alpha$
\begin{eqnarray*}
M(\alpha)\leq ||P||_{2},
\end{eqnarray*}
where $||P||_{2}=(\sum_{i=0}^{d}|p_{i}|^{2})^{1/2}$. It is very easy
to get that $||P||_{2}\leq ||P||_{1}$. This completes the proof of
the lemma.
\end{proof}
Now we turn to give the proof of Theorem $\ref{thm:1}$.
\begin{proof}
From the assumption of the theorem, since $\Sigma$ is defined as in
(\ref{eq:2}). Let the pair ($\alpha$, $\beta$) be corresponding
value to the variable $x$ and $y$ for $\Sigma$ respectively. We have
the following equations:
\begin{subequations}
\begin{eqnarray}
&&f(\alpha,\beta)=
\sum_{0\leq i \leq n}\sum_{0\leq j \leq m}a_{i,j}\alpha^{i}\beta^{j} \label{eq:4a}\\
&&g(\alpha,\beta)=\sum_{0\leq i \leq p}\sum_{0\leq j \leq
q}b_{i,j}\alpha^{i}\beta^{j}. \label{eq:4b}
\end{eqnarray}
\end{subequations}
At first, we consider the lower bound for the equation
($\ref{eq:4a}$). Define $k=[\mathbb{Q}(\alpha, \beta): \mathbb{Q}]$.
Denote by $|f|$=$|f(\alpha, \beta)|$, and $r$, $t$ by the exact
degree of algebraic numbers $\alpha$, $\beta$ respectively. From
Lemma (\ref{lem:1}), if $|f|\neq 0$, then
\begin{eqnarray*}
|f|\geq ||f||^{1-k}_{1}M(\alpha)^{-kn/r}M(\beta)^{-km/t}.
\end{eqnarray*}
We observe that $M(\alpha)$ and $M(\beta)$ derive from $t(x)$ and
$T(y)$ respectively. From Lemma ($\ref{lem:2}$), we can get the
following inequality:
\begin{eqnarray*}
M(\alpha)\leq ||t||_{1}, M(\beta)\leq ||T||_{1}.
\end{eqnarray*}
So we can obtain that
\begin{equation}
|f|\geq ||f||^{1-k}||t||_{1}^{-kn/r}||T||_{1}^{-km/t}.
\end{equation}
By using the same technique as above, we can obtain the lower bound
for the equation ($\ref{eq:4b}$). Denote by $|g|$=$|g(\alpha,
\beta)|$. If $|g|\neq 0$, then
\begin{equation}
|g|\geq ||g||^{1-k}||t||_{1}^{-kn/r}||T||_{1}^{-km/t}.
\end{equation}
Since we have the following two cases:\\
(a) If $f(\alpha, \beta)=0$ or $g(\alpha, \beta)=0$, then $|\Sigma|=
\max\{|f(\alpha, \beta)|, |g(\alpha, \beta)|\}$;\\
(b) If $f(\alpha, \beta)\neq 0$ and $g(\alpha, \beta) \neq 0$, then
$|\Sigma|= \min\{|f(\alpha, \beta)|, |g(\alpha,
\beta)|\}$.\\
Hence we are able to obtain the lower bound for the bivariate
polynomial system. From the above assumption, we can get the
following parameters:
\begin{subequations}
\begin{eqnarray}
&&k=[\mathbb{Q}(\alpha, \beta): \mathbb{Q}]\leq deg\{t(x)\}\cdot deg\{T(y)\}=deg_{t}\cdot deg_{T}, \label{eq:7a}\\
&&N=\max\{||f||_{1}, ||g||_{1} \}, M=\max\{||t||_{1}, ||T||_{1}\},
r=deg_{t}, t=deg_{T}.\label{eq:7b}
\end{eqnarray}
\end{subequations}
Combined with the equation (\ref{eq:7a}) and (\ref{eq:7b}), it is
obvious that $s=k$ and the constant
$c=\frac{deg_{t}}{r}+\frac{deg_{T}}{t}+1$. Finally, note that the
constant $c$ satisfies both cases. This proves the theorem.
\end{proof}
As the corollary of Theorem $\ref{thm:1}$, we have
\begin{corollary}\label{cor:1}
Under the same condition of Theorem $\ref{thm:1}$, if \ $|\Sigma |<
N^{1-s}M^{-c\cdot s}$, then $|\Sigma|=0$. We say that $\alpha$ is
\textbf{associated with} $\beta$ for the real root of \ $\Sigma$ .
Denote by the $\varepsilon=N^{1-s}M^{-c\cdot s}$ for the rest of
this paper.
\end{corollary}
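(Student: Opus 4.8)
The plan is to obtain the corollary as nothing more than the contrapositive of Theorem~\ref{thm:1}, followed by a trivial case analysis that converts the statement ``$|\Sigma|=0$'' into the geometric claim about association. First I would argue by contradiction: suppose $|\Sigma|\neq 0$ while $|\Sigma|<N^{1-s}M^{-c\cdot s}$. Since we are under exactly the hypotheses of Theorem~\ref{thm:1} — the same approximate real algebraic numbers $\alpha,\beta$, the same integer $s=deg_{t}\cdot deg_{T}$, the same $N$ and $M$, and the same constant $c=\frac{deg_{t}}{r}+\frac{deg_{T}}{t}+1$ exhibited in its proof — that theorem applies verbatim and yields $|\Sigma|\geq N^{1-s}M^{-c\cdot s}$, which contradicts the assumed strict inequality. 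Hence $|\Sigma|=0$.

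Next I would unwind what $|\Sigma|=0$ means via the two defining cases. In case (b) one has $f(\alpha,\beta)\neq 0$ and $g(\alpha,\beta)\neq 0$, so $|\Sigma|=\min\{|f(\alpha,\beta)|,|g(\alpha,\beta)|\}>0$; therefore $|\Sigma|=0$ can occur only in case (a), where $|\Sigma|=\max\{|f(\alpha,\beta)|,|g(\alpha,\beta)|\}$. A maximum of two nonnegative quantities vanishes only when both vanish, so $f(\alpha,\beta)=g(\alpha,\beta)=0$, i.e. $(\alpha,\beta)\in\textbf{V}(\Sigma)$, which is precisely the assertion that $\alpha$ is associated with $\beta$ for a real root of $\Sigma$. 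Setting $\varepsilon=N^{1-s}M^{-c\cdot s}$ then merely fixes notation for the rest of the paper.

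I do not expect a serious obstacle, since all of the analytic content lives inside Theorem~\ref{thm:1}. The one point I would state explicitly is that the threshold $\varepsilon$ depends only on the degrees $deg_{t},deg_{T}$ and the coefficient sizes $N,M$ of $\Sigma$ (through the fixed constant $c$), not on the particular candidate pair $(\alpha,\beta)$; this uniformity is exactly what makes $\varepsilon$ usable as a global separation bound in the matching algorithm, so that a numerically evaluated $|\Sigma|$ falling below $\varepsilon$ certifies an exact common zero. A secondary remark worth a sentence concerns the word ``approximate'': Theorem~\ref{thm:1} is really applied to the exact algebraic roots of $t(x)$ and $T(y)$, so the dichotomy ``$|\Sigma|=0$ or $|\Sigma|\geq\varepsilon$'' refers to those exact values, and the numerical approximations enter only when one verifies the strict inequality $|\Sigma|<\varepsilon$ with a guaranteed error estimate.
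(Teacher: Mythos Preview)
Your proof is correct and follows exactly the paper's approach: the paper's entire proof reads ``The proof is very easy by contradiction,'' i.e., the contrapositive of Theorem~\ref{thm:1}. Your additional unwinding of the two cases of $|\Sigma|$ and the remarks on the uniformity of $\varepsilon$ in $(\alpha,\beta)$ are sound elaborations that the paper leaves implicit.
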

\begin{proof}
The proof is very easy by contradiction.
\end{proof}

\subsection{Root multiplicity}
The results of this subsection can be provided for the root
multiplicity of $\Sigma$. We follow the approach and terminology of
\cite{CLS2005} and \cite{EIT2005}.

Let $\mathcal{C}_{f}$, $\mathcal{C}_{g}$ be $f$, $g$ corresponding
affine algebraic plane curves, defined by the equations $\Sigma$.
Let $I =<f, g>$ be the ideal that they generate in $\mathbf{F}[x,
y]$, and so the associated quotient ring is $\mathcal{A} =
\overline{\mathbf{F}}[x, y]/I$. Let the distinct intersection
points, which are the distinct roots of ($\Sigma$), be
$\mathcal{C}_{f} \cap \mathcal{C}_{g} \subset \{S_{ij}=(\alpha_{i},
\beta_{j})\}_{1\leq i\leq u,1\leq j\leq v}$.

The multiplicity of a point $S_{ij}$ is
\begin{eqnarray*}
mult(S_{ij}: \mathcal{C}_{f}
 \cap \mathcal{C}_{g}) =
dim_{\overline{\mathbf{F}}}\mathcal{A}_{S_{ij}}< \infty,
\end{eqnarray*}
where $\mathcal{A}_{S_{ij}}$ is the local ring obtained by
localizing $\mathcal{A}$ at the maximal ideal $I =<x-\alpha_i,
y-\beta_j>$.

If $\mathcal{A}_{S_{ij}}$ is a finite dimensional vector space, then
$S_{ij}=(\alpha_i, \beta_j)$ is an isolated zero of $I$ and its
multiplicity is called the intersection number of the two curves.
The finite $\mathcal{A}$ can be decomposed as a direct sum
$\mathcal{A}= \mathcal{A}_{S_{11}}\bigoplus
\mathcal{A}_{S_{12}}\bigoplus \cdots \\ \bigoplus
\mathcal{A}_{S_{uv}}$ and thus
$dim_{\overline{\mathbf{F}}}\mathcal{A}=\sum_{i=1}^{uv}mult(S_{ij}:
\mathcal{C}_{f} \cap \mathcal{C}_{g})$.

\begin{proposition}\label{pro:1}(\cite{EIT2005}, Proposition 1) Let $f$, $g \in
\mathbf{F}[x,y]$ be two coprime curves, and let $p \in
\overline{\mathbf{F}}^2$ be a point. Then
\begin{eqnarray*}
mult(p: fg)\geq mult(p: f)mult(p: g),
\end{eqnarray*}
where equality holds if and only if $\mathcal{C}_f$ and
$\mathcal{C}_g$ have no common tangents at $p$.
\end{proposition}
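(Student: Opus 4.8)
The plan is to run the classical curve-intersection argument and convert everything to a dimension count in the local ring $\mathcal{O}_p$ of $\overline{\mathbf F}[x,y]$ at $p$; here $mult(p:fg)$ is the intersection number $mult(p:\mathcal C_f\cap\mathcal C_g)=\dim_{\overline{\mathbf F}}\mathcal{O}_p/(f,g)\mathcal{O}_p$ introduced above, while $mult(p:f)$ is the multiplicity of $\mathcal C_f$ at $p$, i.e. the order of vanishing of $f$ at $p$. First I would translate so that $p=(0,0)$, which changes none of the three quantities, and expand $f=f_m+f_{m+1}+\cdots$ and $g=g_n+g_{n+1}+\cdots$ into homogeneous components with $f_m\neq0\neq g_n$, so that $m=mult(p:f)$ and $n=mult(p:g)$; coprimality of $f$ and $g$ is what makes $\dim_{\overline{\mathbf F}}\mathcal{O}_p/(f,g)\mathcal{O}_p$ finite. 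The tangent lines of $\mathcal C_f$ at $p$ are precisely the linear factors of $f_m$ over $\overline{\mathbf F}$ (and similarly $g_n$ for $\mathcal C_g$), so the hypothesis that $\mathcal C_f$ and $\mathcal C_g$ have no common tangent at $p$ says exactly that $\gcd(f_m,g_n)=1$ in $\overline{\mathbf F}[x,y]$, and this is the form in which I will use it.

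For the inequality I would introduce $\mathfrak{m}=(x,y)\subset\mathcal{O}_p$ and the exact sequence of finite-dimensional $\overline{\mathbf F}$-vector spaces
\begin{equation*}
\frac{\mathcal{O}_p}{\mathfrak{m}^{n}}\oplus\frac{\mathcal{O}_p}{\mathfrak{m}^{m}}\ \xrightarrow{\ \psi\ }\ \frac{\mathcal{O}_p}{\mathfrak{m}^{m+n}}\ \xrightarrow{\ \pi\ }\ \frac{\mathcal{O}_p}{(f,g)\mathcal{O}_p+\mathfrak{m}^{m+n}}\longrightarrow 0,
\end{equation*}
where $\pi$ is the canonical surjection and $\psi(\bar a,\bar b)=\overline{af+bg}$ is well defined because $f\in\mathfrak{m}^{m}$ and $g\in\mathfrak{m}^{n}$. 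Using $\dim_{\overline{\mathbf F}}\mathcal{O}_p/\mathfrak{m}^{k}=\binom{k+1}{2}$ and rank--nullity, $\dim\mathcal{O}_p/\big((f,g)\mathcal{O}_p+\mathfrak{m}^{m+n}\big)\geq\binom{m+n+1}{2}-\binom{n+1}{2}-\binom{m+1}{2}=mn$, and since $\mathcal{O}_p/(f,g)\mathcal{O}_p$ surjects onto that quotient we get $mult(p:fg)\geq mn$. Moreover $mult(p:fg)=mn$ holds if and only if the two inequalities just used are equalities, that is, if and only if (i) $\psi$ is injective and (ii) $\mathfrak{m}^{m+n}\subseteq(f,g)\mathcal{O}_p$.

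It then remains to tie (i) and (ii) to the tangent condition. For (i): if $\psi(\bar a,\bar b)=0$ with $(\bar a,\bar b)\neq0$, comparing the lowest-degree homogeneous parts of $af+bg$ forces a relation $a_rf_m=-b_sg_n$ between those parts, with $\deg a_r<n$ and $\deg b_s<m$, which is impossible when $\gcd(f_m,g_n)=1$ by unique factorization in $\overline{\mathbf F}[x,y]$; conversely, a common linear factor $L$ (so $f_m=L\tilde f$, $g_n=L\tilde g$) produces the explicit nonzero kernel element $(\overline{\tilde g},-\overline{\tilde f})$, so (i) is equivalent to the absence of a common tangent. For (ii), assuming $\gcd(f_m,g_n)=1$, I would first prove the purely algebraic identity $f_m\,V_{t-m}+g_n\,V_{t-n}=V_t$ for every $t\geq m+n$, where $V_j$ is the space of forms of degree $j$, via the count $(t-m+1)+(t-n+1)-(t-m-n+1)=t+1$, the subtracted term being $\dim\big(f_mg_n\,V_{t-m-n}\big)$ by coprimality. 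Hence every monomial of degree $t\geq m+n$ lies in $(f,g)\mathcal{O}_p+\mathfrak{m}^{t+1}$, so $(f,g)\mathcal{O}_p+\mathfrak{m}^{m+n}=(f,g)\mathcal{O}_p+\mathfrak{m}^{t}$ for all $t\geq m+n$, and Krull's intersection theorem in the Noetherian local ring $\mathcal{O}_p/(f,g)\mathcal{O}_p$ collapses this to $\mathfrak{m}^{m+n}\subseteq(f,g)\mathcal{O}_p$. Combining (i) and (ii) gives equality when there is no common tangent, while if there is one then (i) fails and the surjection already forces $mult(p:fg)\geq\dim\mathcal{O}_p/\big((f,g)\mathcal{O}_p+\mathfrak{m}^{m+n}\big)>mn$. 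The main obstacle is step (ii): promoting the elementary surjectivity of the form-multiplication map to the genuine ideal containment $\mathfrak{m}^{m+n}\subseteq(f,g)\mathcal{O}_p$ in the local ring, which is exactly where the Noetherian/Krull input is essential.
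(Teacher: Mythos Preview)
Your proof is correct and is the standard local-algebra argument for this classical inequality between intersection multiplicity and the product of curve multiplicities (essentially the proof in Fulton's \emph{Algebraic Curves}, \S3.3). The exact sequence, the dimension count $\binom{m+n+1}{2}-\binom{m+1}{2}-\binom{n+1}{2}=mn$, the characterization of injectivity of $\psi$ via $\gcd(f_m,g_n)$, and the Krull/Nakayama step showing $\mathfrak m^{m+n}\subseteq(f,g)\mathcal O_p$ when the tangent cones are coprime are all handled correctly.

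There is nothing to compare against in the paper itself: Proposition~\ref{pro:1} is quoted verbatim from \cite{EIT2005} (their Proposition~1) and is stated without proof here. So your argument goes well beyond what the paper does; it supplies a complete self-contained proof where the paper simply cites the result. One small remark on notation that you already resolved correctly: the paper's $mult(p:fg)$ is not the multiplicity of the product curve but the intersection number $mult(p:\mathcal C_f\cap\mathcal C_g)=\dim_{\overline{\mathbf F}}\mathcal O_p/(f,g)$ defined just above the proposition, while $mult(p:f)$ and $mult(p:g)$ are the point multiplicities of the individual curves; the statement only makes sense under this reading, and your proof uses it.
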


\begin{proposition}\label{pro:2}
Let us obtain the real roots of \ $\Sigma=0$ in ($\ref{eq:alpha}$)
and ($\ref{eq:beta}$). If the two matching pairs $(\alpha_{i},
\beta_{j})$ and $(\alpha_{i+1}, \beta_{j+1}) (for \ 1\leq i\leq u,
1\leq j\leq v)$ are satisfying $\Sigma=0$,
$|\alpha_{i}-\alpha_{i+1}|<\varepsilon$ and
$|\beta_{j}-\beta_{j+1}|<\varepsilon$, then the $(\alpha_{i},
\beta_{j})$ is multiple root of \ $\Sigma=0$.
\end{proposition}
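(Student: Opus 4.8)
The plan is to convert the two hypotheses $|\alpha_{i}-\alpha_{i+1}|<\varepsilon$ and $|\beta_{j}-\beta_{j+1}|<\varepsilon$ into statements about the order of vanishing of the univariate resultants $t(x)$ and $T(y)$ at the coordinates $\alpha_{i}$ and $\beta_{j}$, and then to transfer that information to the intersection multiplicity $mult(S_{ij}:\mathcal{C}_{f}\cap\mathcal{C}_{g})$ at the matched point $S_{ij}=(\alpha_{i},\beta_{j})$.

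First I would record the classical dictionary between the resultant and intersection numbers: for the zero-dimensional system $\Sigma$, after the extraneous factors have been discarded (Theorem~\ref{thm:remove}), the multiplicity of $\alpha_{i}$ as a root of $t(x)=\mbox{Res}_{y}(f,g)$ equals the sum of $mult(S:\mathcal{C}_{f}\cap\mathcal{C}_{g})$ over all intersection points $S$ with $\pi(S)=\alpha_{i}$, and symmetrically for $\beta_{j}$ as a root of $T(y)=\mbox{Res}_{x}(f,g)$. In particular one gets $mult(\alpha_{i}:t)\ge mult(S_{ij}:\mathcal{C}_{f}\cap\mathcal{C}_{g})$ and $mult(\beta_{j}:T)\ge mult(S_{ij}:\mathcal{C}_{f}\cap\mathcal{C}_{g})$; this part I would simply cite from \cite{CLS2005}.

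The core step is to show that $\varepsilon=N^{1-s}M^{-c\cdot s}$ underestimates the minimal distance between two distinct real roots of $t(x)$ (and likewise of $T(y)$). Using the Mahler-measure estimates already in place in Lemma~\ref{lem:1} and Lemma~\ref{lem:2} together with a Mignotte-type root-separation bound, two distinct roots $\alpha\ne\alpha'$ of $t$ satisfy $|\alpha-\alpha'|\ge\|t\|_{1}^{-O(deg_{t})}$, and one checks that this lower bound is $\ge\varepsilon$ once the bookkeeping $s=deg_{t}\cdot deg_{T}$ and the constant $c=\frac{deg_{t}}{r}+\frac{deg_{T}}{t}+1$ from the proof of Theorem~\ref{thm:1} are plugged in. Hence the inequality $|\alpha_{i}-\alpha_{i+1}|<\varepsilon$ is incompatible with $\alpha_{i}$ and $\alpha_{i+1}$ naming two different roots of $t$: the two matched abscissae refer to a single exact root of $t(x)$, which therefore carries the two intersection points arising from $(\alpha_{i},\beta_{j})$ and $(\alpha_{i+1},\beta_{j+1})$, so $mult(\alpha_{i}:t)\ge 2$. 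Applying the same argument to $|\beta_{j}-\beta_{j+1}|<\varepsilon$ gives $mult(\beta_{j}:T)\ge 2$, and since by Corollary~\ref{cor:1} each matched pair is a genuine common zero of $\Sigma$, the two pairs actually coincide with $S_{ij}$; the inequalities of the first step then force $mult(S_{ij}:\mathcal{C}_{f}\cap\mathcal{C}_{g})\ge 2$, i.e. $S_{ij}$ is a multiple root of $\Sigma=0$. Proposition~\ref{pro:1} can be added as a remark to note that $\mathcal{C}_{f}$ and $\mathcal{C}_{g}$ then necessarily share a tangent at $S_{ij}$.

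The main obstacle is precisely this middle step. Corollary~\ref{cor:1} only bounds the \emph{value} $|\Sigma|$ from below, not the gaps between roots of the univariate resultants, so the identification of $\varepsilon$ with a root-separation threshold for $t$ and $T$ has to be justified independently; one must verify that the product $s=deg_{t}\cdot deg_{T}$ and the constant $c$ are large enough for $\varepsilon$ to dominate a classical separation bound for $t$ and for $T$ (and, implicitly, that the isolated approximations $\alpha_{i}$, $\beta_{j}$ are accurate to within $\varepsilon$). Once that quantitative comparison is settled, the rest is routine: it uses only the resultant-to-intersection-multiplicity relation of \cite{CLS2005}, Corollary~\ref{cor:1}, and Proposition~\ref{pro:1}.
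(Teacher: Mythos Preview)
Your proposal is far more elaborate than what the paper actually does. The paper's proof is a two-line heuristic: from Corollary~\ref{cor:1} it asserts that ``the error controlling is less than $\varepsilon$ in numerical computation,'' and then simply declares that $|\alpha_i-\alpha_{i+1}|<\varepsilon$ and $|\beta_j-\beta_{j+1}|<\varepsilon$ force $\alpha_i=\alpha_{i+1}$ and $\beta_j=\beta_{j+1}$ ``in the truncated error,'' whence $(\alpha_i,\beta_j)$ is repeated and hence multiple. No root-separation bound is proved, the resultant--intersection-multiplicity dictionary from \cite{CLS2005} is never invoked, and Proposition~\ref{pro:1} plays no role. You are, in effect, attempting to supply the rigor the paper omits; the obstacle you yourself flag at the end (that $\varepsilon$ must be shown to dominate a Mignotte-type separation bound for $t$ and for $T$) is precisely the missing justification, and the paper does not fill it either---it treats $\varepsilon$ purely as the working numerical precision and regards two outputs within $\varepsilon$ as indistinguishable.

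Independently of that comparison, your last inference has its own gap. Once $\alpha_i$ and $\alpha_{i+1}$ are identified with a single exact abscissa $\alpha$, and likewise $\beta_j=\beta_{j+1}=\beta$, the two matched pairs collapse to \emph{one} point $S_{ij}=(\alpha,\beta)$; you cannot then say ``$\alpha$ carries two intersection points'' to conclude $mult(\alpha:t)\geq 2$. That inequality would instead have to come from the algorithmic premise that the numerical root list of $t$ is produced with multiplicity (so two nearby outputs signal a repeated root of $t$ itself), and even granting it the dictionary only yields that the \emph{sum} of intersection multiplicities along the fiber $x=\alpha$ is at least $2$---pushing all of that onto $S_{ij}$ still requires ruling out a second intersection point on the same fiber with a different $y$-coordinate, which neither your sketch nor the paper's proof addresses.
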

\begin{proof}
From Theorem \ref{thm:1} and Corollary \ref{cor:1}, it is obvious
that $\Sigma =0$ if and only if
\begin{eqnarray*}
|\Sigma|< \varepsilon.
\end{eqnarray*}
Therefore, the error controlling is less than $\varepsilon$ in
numerical computation. Under the assumption of the proposition, we
get $|\alpha_i-\alpha_{i+1}|< \varepsilon $ and
$|\beta_j-\beta_{j+1}|<\varepsilon$. So we are able to obtain that
$|\alpha_i-\alpha_{i+1}| =0$ and $|\beta_j-\beta_{j+1}| =0$ int the
truncated error. This proves the proposition.
\end{proof}

From Corollary $\ref{cor:1}$, the two-tuple $(\alpha, \beta)$ is the
real root of $\Sigma=0$. This method is called a
\textbf{zero-matching method}. The technique is a posteriori method
to match the solutions for the bivariate system. It can be
generalized easily to real solving the multivariate polynomial
systems.
\section{Derivation of the Algorithm}
The aim of this section is to describe an algorithm for real solving
bivariate polynomial equations by using zero-matching method. We
first find the parameters $N$, $c$ and $s$, then obtain the no
extraneous factors $t(x)$ and $T(y)$ with the resultant elimination
methods, and real solving two univariate polynomials, and finally
match the real roots for the systems.
\subsection{Description of algorithm}
Algorithm 1 is to discard the extraneous factors from the resultant
method, algorithm 2 is to obtain the solutions of bivariate
polynomial systems.

\begin{algorithm}
 \caption{$\mathbf{NoExtrRes}(\Sigma$, $var$)}
\begin{algorithmic}[1]

  \label{alg:1}
  \REQUIRE \{$f(x, y), g(x, y)$\}, $var$ is one variable.
  \ENSURE No extraneous factors resultant of $\Sigma$.
\STATE $tem \leftarrow Res_{var}\{f(x, y), g(x, y)\}$;
\IF {$tem$ is
irreducible}
\RETURN $tem$ ;
\ELSE
\STATE $tem \leftarrow Res \cdot
extraneousfacotrs$;
\RETURN $Res$.
 \ENDIF

\end{algorithmic}
\end{algorithm}

Now we can give the algorithm \ref{alg:2} to compute the real roots
for $\Sigma=0$.

\begin{algorithm}
  \caption{$\mathbf{\textsc{zmm}}(\Sigma$)}
  \begin{algorithmic}[1]
  \label{alg:2}
 \REQUIRE  $\Sigma= \{f(x, y), g(x, y)\}$ is a zero-dimensional bivariate polynomial
 system.
 \ENSURE  A set for the real roots of $\Sigma= 0$.
\STATE Project on the $x$-axis such that $t(x)= Res_y( f(x, y), g(x,
y))$; \STATE Project on the $y$-axis such that $T(y) = Res_x( f(x,
y), g(x, y))$; \STATE Discard the extraneous factors from $t(x)$ and
$T(y)$ by using Algorithm \ref{alg:1}; \STATE Find the parameters
$N$ and $s$, and Compute $c$ according to the Theorem \ref{thm:1};
\STATE Obtain the lower bound $\varepsilon$ by Corollary
\ref{cor:1}; \STATE Solve the real roots of the resultant $t(x)$ for
the set $\mathbf{S_{x}}= \{\alpha_1, \alpha_2,\cdots, \alpha_u\}$;
\STATE Solve the real roots of the resultant $T(y)$ for the set
$\mathbf{S_{y}}= \{\beta_1, \beta_2, \cdots, \beta_v\}$; \STATE
Match the real root pair to get the solving set $\mathbf{S} =
\{(\alpha_i, \beta_j), 1 \leq i \leq u, 1 \leq j \leq v\}$ by
Corollary \ref{cor:1};
 \STATE Check the root multiplicity of the set
$\mathbf{S}$ by Proposition \ref{pro:2}.
\end{algorithmic}
\end{algorithm}

The parallelization of the algorithm that we have just described can
be easily done because it performs the same computations on
different steps of data without the necessity of communication
between the processors. Observe that the Step 1 and Step 2, Step 6
and Step 7 of the algorithm can be easily paralleled, respectively.

Now we get a theorem about the computational complexity of the whole
algorithm.

\begin{theorem}
 Algorithm \ref{alg:2} works correctly as specified and its complexity
includes as follows: \\
(a) $O(d\tau+ dlgd)$ for computation of real solving univariate
polynomial, where $d$ is the degree of corresponding polynomial,
$\tau = 1 + \max_{i\leq d} lg|a_i|$ and $a_i$ is the
coefficients.\\
 (b) $O(uv)$ for matching the solutions of bivariate
polynomial system.
\end{theorem}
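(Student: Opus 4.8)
The plan is to split the statement into its two independent parts — correctness of Algorithm~\ref{alg:2}, and the complexity bounds (a) and (b) — and to discharge each using the results already assembled in Section~2.

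For \textbf{correctness}, I would show that the set $\mathbf{S}$ returned in Steps~8--9 is exactly $\mathbf{V}(\Sigma)$ with the right multiplicity flags. First, Steps~1--2 form $t(x)=\mathrm{Res}_y(f,g)$ and $T(y)=\mathrm{Res}_x(f,g)$; by the standard resultant property recalled in Section~2 we have $\pi(\mathbf{V}(\Sigma))\subseteq\mathbf{V}(t(x))$ and the symmetric inclusion for $T(y)$, so after Steps~6--7 every true solution of $\Sigma$ has its two coordinates among the $\alpha_i$ and the $\beta_j$. Step~3 calls Algorithm~\ref{alg:1}, whose correctness is Theorem~\ref{thm:remove}, to delete only extraneous factors, hence no genuine projection is lost. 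Steps~4--5 fix $\varepsilon=N^{1-s}M^{-c\cdot s}$. The matching test in Step~8 is then justified by Corollary~\ref{cor:1}: evaluating $|\Sigma|$ for a candidate pair $(\alpha_i,\beta_j)$ in the $\max$/$\min$ sense of Theorem~\ref{thm:1}(a)--(b), if $|\Sigma|<\varepsilon$ the corollary forces $|\Sigma|=0$, i.e.\ $(\alpha_i,\beta_j)\in\mathbf{V}(\Sigma)$, while conversely a true solution trivially satisfies the test; thus $\mathbf{S}$ is precisely $\mathbf{V}(\Sigma)$. Step~9 then applies Proposition~\ref{pro:2} to the adjacent matched pairs to mark multiple roots. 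Concatenating these observations yields ``works correctly as specified.''

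For part~\textbf{(a)}, I would observe that the only nontrivial work in Steps~6--7 is univariate real root isolation/approximation for a polynomial of degree $d$ (here $d=\deg t\le\deg_x\!\cdot\!\deg_y$, and likewise for $T$) with integer coefficients of bit length $\tau=1+\max_{i\le d}\lg|a_i|$; the bound on $\tau$ in terms of $N$ and the degrees comes from Hadamard's inequality applied to the resultant determinant. Invoking the standard complexity of univariate real solving by Sturm/Descartes-type methods gives the stated $O(d\tau+d\lg d)$ (up to the absorbed logarithmic factors). For part~\textbf{(b)}, Step~8 ranges over the $u\times v$ grid of pairs $(\alpha_i,\beta_j)$, performing for each a bounded number of evaluations of $f,g$ to accuracy $\varepsilon$ and one comparison against $\varepsilon$ — $O(1)$ work per pair in the pair-count metric — so $O(uv)$ in total, and the multiplicity scan of Step~9 compares each matched pair with its neighbour, again $O(uv)$.

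The step I expect to be the main obstacle is the \emph{numerical} side of correctness: Theorem~\ref{thm:1} and Corollary~\ref{cor:1} are stated for the exact algebraic values $\alpha,\beta$, whereas the algorithm manipulates approximations. One must argue that the $\alpha_i,\beta_j$ can be, and are, refined to a working precision — polynomial in $\lg(1/\varepsilon)=O\!\bigl(s\lg N+c\,s\lg M\bigr)$ and in the degrees — at which the evaluated $|\Sigma|$ lies within $\varepsilon/2$ of its exact value, so that the decision ``$|\Sigma|<\varepsilon$'' still separates vanishing from non-vanishing by the bound of Theorem~\ref{thm:1}; making this precision explicit and folding it into the bit-complexity of Steps~6--8 is the delicate part. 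The remainder is bookkeeping with resultant size estimates and the cited univariate-solving complexity, and is routine.
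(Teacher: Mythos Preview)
Your proposal is correct and follows the same approach as the paper: correctness is attributed to Theorem~\ref{thm:1} (and its corollary), the bound in~(a) is the standard Descartes/subdivision count for univariate real-root isolation, and~(b) is the obvious $u\times v$ grid count. The paper's own proof is in fact far terser than yours---three sentences---and does not engage at all with the numerical-precision obstacle you identify; that concern is legitimate, but the paper simply does not address it, so you are supplying more care than the original argument contains.
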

\begin{proof}
 Correctness of the algorithm follows from theorem \ref{thm:1}.\\
(a) The number of arithmetic operations required to isolate all real
roots is the number of real root isolation of univariate polynomial
by using subdivision-based Descartes' rule of sign. Using exactly
the same arguments we know that they perform the same number of
steps, that is $O(d\tau + dlgd)$. \\
(b) As indicated before, the problem
of matching the real roots of polynomial system mainly relies on the
scale of solutions of every variable, respectively.
\end{proof}

\subsection{A small example in detail}
\newcounter{num3}
\begin{example} We propose a simple example $f(x,y)=x^{2}-y^{2}-3$ and
$g(x,y)=3x^{2}-2y^{3}-1$ to illustrate our algorithms.
\begin{list}{Step \arabic{num3}:}{\usecounter{num3}\setlength{\rightmargin}{\leftmargin}}
\item $t(x)=4*x^6-45*x^4+114*x^2-109$;
\item $T(y)=(-2*y^3+8+3*y^2)^2$;
\item Discard the extraneous factors $T(y)=-3*y^2-8+2*y^3$;
\item Obtain the parameters $N=5$, $c=2$, $s=4$;
\item Obtain the lower bound $\varepsilon=.1280\times10^{-4}$;
\item Solve the real roots of the resultant $t(x)$ for the set
$\mathbf{S_{x}}=\{-2.858288520, 2.858288520\}$;
\item Solve the real roots of the resultant $T(y)$ for the set
$\mathbf{S_{y}}=\{2.273722337\}$;
\item Combine the the pairs from $\mathbf{S_{x}}$ and $\mathbf{S_{y}}$ respectively,
Substitute the pairs into $\Sigma$ for variables $x$ and $y$,
determine whether less than the lower bound $\varepsilon$, finally
we find that the pairs $\mathbf{S}=\{\{x = -2.858288520, y =
2.273722337\}$, $\{x = 2.858288520, y = 2.273722337\}\}$ are the
solutions for $\Sigma$;
\item The multiplicity of the root of the system is one.
\end{list}
\end{example}

\subsection{Generalization and applications}
As for the generalization of the algorithm to real solving the
multivariate equation systems case, we have to say that the
situation is completely analogous to the bivariate case. However,
its key technique is to transform the multivariate polynomial
equations to the corresponding univariate polynomial equations. We
can consider the Dixon Resultant Method to break this problem
\cite{CZG2002}. However, we observe that how to improve the
projection algorithm in resultant methods is the significant
challenge.

Moreover, our algorithm is applicable for rapidly computing the
minimum distance between two objects collision detection
\cite{YFQ2009}. This also enables us to improve the complexity of
computing the topology of a real plane algebraic curve
\cite{DET2007}.

\section{Some comparisons}
We have implemented the above algorithms as a software package
\textsc{zmm} in \textit{Maple} 12. For problems of small size like
the example of Section 3, any method can obtain the solutions in
little time. But when the size of the problems is not small the
differences appear clearly. Extensive experiments with this package
show that this approach is efficient and stable, especially for
larger and more complex bivariate polynomial systems.

We compare our method with \textsc{lgp} \cite{CGL2008}, Isolate
\cite{R1999}, \textsc{discoverer} \cite{XY2002}, and \textsc{grur}
\cite{DET2007}. \textsc{lgp} is a software package for root
isolation of bivariate polynomial systems with local generic
position method. Isolate is a tool to solve general equation systems
based on the Realsolving \textsc{c} library by Rouillier.
\textsc{discoverer} is a tool for solving semi-algebraic systems.
\textsc{grur} is a tool to solve bivariate equation systems. The
following examples run in the same platform of\textit{ Maple} 12
under Windows and \textsc{amd} Athlon(tm) 2.70 \textsc{ghz}, 2.00
\textsc{gb} of main memory. We did three sets of experiments. The
precision in these experiments is set to be high. In three tables,
where '?' represents that the computation is not finished.

In Table 1 the results are given both $f$ and $g$ are randomly
generated dense polynomials with the same degree and with integer
coefficients between $-20$ and $20$. The command of
\textit{Maple} is as follows: \\
$randpoly([x, y], coeffs = rand(-20..20), dense, degree = 10)$.

\begin{table}[H]\label{tab:1}
\begin{center}
\caption{time for computing dense bivariate polynomials with no
multiple roots}
\begin{tabular}{|c||c|c|c||c|c|c|c|c|}
 \hline \rotatebox{90}{system}& \multicolumn{2}{c|}{deg} & \rotatebox{90}{solutions}&\multicolumn{5}{c|} {\raisebox{2.3ex}[0pt]{Average
 Time(sec)} } \\
\cline{2-3}\cline{5-9}  & $f$ & $g$ & & \textsc{zmm }&\textsc{lgp}&Isolate &\textsc{discoverer} &\textsc{grur}\\
\hline\hline
 S1& 4 &7 &2& 0.031 &0.031 &0.047 &0.313 &2.734 \\ \hline
  S2& 6 &8& 6 &0.415&
1.328 &0.500 &1.828 &247.203 \\ \hline S3& 7 &8 &6& 1.204 &2.734&
1.500 &7.047 &382.640 \\ \hline
 S4 &8& 9 &6 &4.211 &8.906& 4.672 &20.437&
2714.438 \\ \hline S5& 9& 10& 2& 4.070 &8.485& 4.687 &89.235&
1645.312\\ \hline S6 &10 &7 &6& 1.805& 3.860 &2.109& 22.250&
978.421 \\ \hline S7 &10 &11 &4 &21.078& 43.734 &22.828 &?& ? \\
\hline S8 &12& 11 &2 &26.945& 54.969& 29.094& ? &? \\ \hline S9 &12&
13 &4 &118.266 &241.734& 123.469 &? &? \\ \hline  S10 &13 &11& 1&
15.446 &31.485& 17.796 &? &?  \\ \hline S11& 14 &10& 8 &63.914&
200.828 &68.594 &? &? \\ \hline
\end{tabular}
\end{center}
\end{table}

In Table 2 the results are given both $f$ and $g$ are randomly
generated sparse polynomials in the same degree, with sparsity
$default$, and with integer coefficients between $-20$ and
$20$. The command of \textit{Maple} is as follows: \\
$randpoly([x, y], coeffs = rand(-20..20), sparse, degree = 10)$.

\begin{table}[H]\label{tab:2}
\begin{center}
\caption{time for computing sparse bivariate polynomials with no
multiple roots}
\begin{tabular}{|c||c|c|c||c|c|c|c|c|}
 \hline \rotatebox{90}{system}& \multicolumn{2}{c|}{deg} & \rotatebox{90}{solutions}&\multicolumn{5}{c|} {\raisebox{2.3ex}[0pt]{Average
 Time(sec)} } \\
\cline{2-3}\cline{5-9}  & $f$ & $g$ & & \textsc{zmm }&\textsc{lgp}&Isolate &\textsc{discoverer} &\textsc{grur}\\
\hline\hline S1& 5& 6& 1& 0.015& 0.032 &0.015 &0.141&1.032
\\ \hline S2 & 6 &7 &3 &0.040 &0.062& 0.047& 0.188& 5.375 \\ \hline S3
&7& 5 &3& 0.024& 0.047& 0.047& 0.265 &2.688 \\ \hline S4 &8 &6& 5
&0.031 &0.031 &0.047& 0.094 &1.031 \\ \hline S5 &9& 8 &2 &0.047
&0.172 &0.078& 1.828 & 51.000 \\ \hline S6 &10& 11 &3 & 0.063& 0.297
&0.125& 0.656& 11.110 \\ \hline S7 & 11& 9& 2 &0.164& 0.609 &0.375&
3.938&877.875
\\ \hline S8 &12 &13 &2& 1.141& 2.593&1.453& 6.703& 1607.719  \\ \hline S9 &13&
11& 4& 2.508& 5.344 &2.969& ? &? \\ \hline S10&15 &17 &1 &0.532&
1.234& 1.266 &?& ?  \\ \hline S11& 20 &17& 4& 18.180& 39.688&
20.235& ? &?\\ \hline

\end{tabular}
\end{center}
\end{table}
In Table 3 the results are given is done with polynomial systems
with multiple roots. We randomly generate a polynomial $h(x, y, z)$
and take $f(x, y) = Res_z(h, h_z), g(x, y) = f_y(x, y)$. Since $f
(x,y)$ is the projection of a space curve to the $xy$-plane, it most
probably has singular points and $f = g = 0$ is an equation system
with multiple roots. The command of
\textit{Maple} is as follows:\\
$h := randpoly([x, y, z], coeffs = rand(-5..5), degree = 5); f :=
resultant(h, diff(h, z), z); g := diff( f, y).$

\begin{table}[H]\label{tab:3}
\begin{center}
\caption{time for computing bivariate polynomials with multiple
roots}
\begin{tabular}{|c||c|c|c||c|c|c|c|c|}
 \hline \rotatebox{90}{system}& \multicolumn{2}{c|}{deg} & \rotatebox{90}{solutions}&\multicolumn{5}{c|} {\raisebox{2.3ex}[0pt]{Average
 Time(sec)} } \\
\cline{2-3}\cline{5-9}  & $f$ & $g$ & & \textsc{zmm }&\textsc{lgp}&Isolate &\textsc{discoverer} &\textsc{grur}\\
\hline\hline S1 &3 &2 &2& 0.016& 0.016& 0.016 &0.016& 0.062 \\
\hline S2 & 4 &3 &2& 0.& 0.032 &0.031& 0.016& 0.094 \\
\hline S3 &4 &6 &7& 0.024 &0.016& 0.047& 0.109& 1.109 \\
\hline  S4& 5 &4& 3& 0. &0.016& 0.& 0.016& 0.109 \\
\hline S5 &6 &5& 2 &0.015& 0. &0. &0.016& 0.063 \\
\hline S6& 9 &8 &2& 0.016& 0.046& 0.032& 0.015& 0.063 \\
\hline S7 &12 &11& 3 &0.109& 0.234& 0.187& 0.063& 0.094 \\
\hline S8& 13& 12& 7& 2.875& 137.641 &3.141& 1.328 &207.094 \\
\hline S9 &14 &13& 4& 0.860 &2.891& 0.953 &0.141 &0.3110 \\
\hline S10 &19& 18& 1 &0.672& 1.547& 0.797& 22.156 &1520.812 \\
\hline S11 &16 &15 &5 &7.945 &27.047& 9.000 & ?& ? \\
\hline
\end{tabular}
\end{center}
\end{table}
From the Table 1, 2 and 3, we have the following observations.

In the first two cases, the equations are randomly generated and
hence may have no multiple roots. For systems without multiple
roots, \textsc{zmm }is the fastest method, which is significantly
faster than \textsc{lgp} and Isolate. Both \textsc{zmm }and\textsc{
lgp }compute two resultants and isolate their real roots.
\textsc{lgp} is slow, because the polynomials obtained by the shear
map are usually dense and with large coefficients \cite{CGL2008}.
discoverer and \textsc{grur} generally work for equation systems
with degrees not higher than ten within reasonable time.

For systems with multiple roots, in the sparse and low degree cases,
all methods are fast. Note that our method is quite stable for
equation systems with and without multiple roots. \textsc{lgp} and
Isolate are also quite stable, but slower than \textsc{zmm} for
bivariate equation systems.

We also observe that all methods spend more time with sparse and
dense polynomials than polynomials with multiple roots in the same
high degree. This phenomenon needs further exploration.

\begin{remark}
Of course, we should mention that \textsc{discoverer} and Isolate
can be used to solve general polynomial equations and even
inequalities. Here our comparison is limited to the bivariate case.
In further work, we would like to consider solving multivariate
polynomial equations.
\end{remark}
\begin{remark}
  As is well known, the parallel algorithm is well suited for the implementation on parallel computers that
allows the increase of the calculation speed. If our algorithm have
been fully parallelized by using a large enough number of processors
for each case, the real solutions of all the examples will have been
computed in a couple of seconds.
\end{remark}
\section{Conclusion}
In this paper, we propose a zero-matching method to real solving
bivariate polynomial equation systems. The basic idea of this method
is to find the lower bound for bivariate polynomial system when the
system is non-zero. Moreover, we provide an algorithm for discarding
extraneous factors with resultant and show how to construct a
parallelized algorithm for real solving the bivariate polynomial
system. An efficient method for multiplicities of the roots is also
derived. The complexity of our method has increased steadily with
the growth of bivariate polynomial system. Extensive experiments
show that our approach is efficient and stable. The result of this
paper can be extended to real solving of bivariate polynomial
equations with more than two polynomials by using the resultant
method. Furthermore, our method can be generalized easily to
multivariate polynomial systems.

\section{Acknowledgement}
We would like to thank Prof. Xiaoshan Gao and Dr. Jinsan Cheng for
providing the Maple code of their method, available at:
http://www.mmrc.iss.ac.cn/$\sim$xgao/software.html.

The first author is also grateful to Dr. Shizhong Zhao for his
valuable discussions about discarding the extraneous factors in
resultant.

\end{document}